\newtheorem{Pro}{Proposition}
\renewcommand\normalsize{%
	\@setfontsize\normalsize\@xpt\@xiipt
	\abovedisplayskip 6.5\p@ \@plus2\p@ \@minus5\p@
	\abovedisplayshortskip \z@ \@plus3\p@
	\belowdisplayshortskip 6\p@ \@plus3\p@ \@minus3\p@
	\belowdisplayskip \abovedisplayskip
	\let\@listi\@listI}
\begin{document}
%
\title{Video Streaming in Cooperative Vehicular Networks}
%
%

\author{
    \authorblockN{
        Bin Pan\authorrefmark{1}
    }
    \authorblockA{
        \authorrefmark{1}
        State Key Laboratory of Rail Traffic Control and Safety, Beijing Jiaotong University, Beijing, China}
}

\maketitle

\begin{abstract}
	Video services in vehicular networks play a significant role in our daily traveling. In this paper, we propose a cooperative communication scheme to facilitate video data transmission, utilizing the mobility of vehicles and the cooperation among infrastructure and vehicles. To improve the video quality of experience (QoE), i.e., reduce the interruption ratio, quality variation and improve the playback quality, we design a Back Compensation (BC) video transmission strategy with the knowledge of vehicle status information. In addition, we analyze the throughput with one-hop and target-cluster-based cooperation schemes and obtain their closed-form expressions, respectively, which is useful for video encoding design in the central server. Simulation results demonstrate that the proposed approach can improve the video performance significantly and verify the accuracy of our analytical results.
\end{abstract}

%

%
\IEEEpeerreviewmaketitle

\section{Introduction}

In the past decade, multimedia wireless service has gained significant attention from industry and academic field. Recently, this trend has been extended to vehicular networks thanks to the increasing demand from users in vehicles. Video services may become one of the most popular applications in vehicular networks, from delivering emergency information to providing passengers various entertainment choices. Typically, video services request for sufficient and stable wireless resources due to a large amount of video data to ensure long smooth playback and high video quality. Due to the high cost of deployment, roadside units (RSUs) cannot cover the whole road. When a vehicle moves out of the coverage of RSUs and no other vehicles can act as relays to help connect to RSUs, it cannot receive video data from RSUs directly. Nevertheless, the vehicles moving in the opposite direction (called carriers) can store, carry and forward the data (received from other RSUs) to the target when they meet. In this way, the target vehicle can receive video data continuously along the road, which ensures the smoothness of video playback.

There are many related works focusing on enhancing the video playback quality in vehicular networks. In \cite{xing2012adaptive}, the authors proposed an adaptive video streaming solution, with which a vehicle can download data using a direct or a multi-hop path from the RSUs. The solution includes a simple relay selection algorithm and a video quality adaptation algorithm. In \cite{sun2014quality}, the authors presented a video streaming scheme for cognitive vehicular networks, which can determine the appropriate number of video layers by considering several significant factors including vehicle position, speed and the activity of primary users. In \cite{he2016resource}, with a heterogeneous architecture consisting of the cellular base station and roadside infrastructure, the authors proposed a resource-allocation scheme to facilitate video streaming application based on the semi-Markov decision process (SMDP). A channel allocation and adaptive streaming algorithm was proposed in \cite{sun2017channel} for video services of multiple vehicles, where users compete for wireless channels. The vehicles can bid reasonably according to their utility values through an auction-based mechanism and make the tradeoff between smooth playback and visual quality. Note that the aforesaid schemes utilize the vehicle-to-infrastructure (V2I) communications or relay-aided joint vehicle-to-vehicle (V2V) and V2I communications, i.e., both V2V and V2I links are available at the same time. If the target is far away from RSUs, it will not receive data. Therefore, the target needs the help of carriers through the store-carry-forward strategy.

In this work, we consider scalable video coding (SVC)-based video streaming in cooperative vehicular networks. We utilize a cooperative communication scheme to facilitate video data transmission, which takes advantage of the mobility of vehicles and the cooperation among RSUs and vehicles. A target vehicle will receive video data from carriers in the opposite direction to maintain video playback quality. Then we propose a Back Compensation (BC) video transmission strategy to decrease the interruption ratio, quality variation and improve the playback quality. In addition, we consider target-cluster-based cooperation to increase the amount of data received by the target from carriers. At last, we analyze the throughput, which can give the central server advice for video encoding design. Simulation results show that the performance of our approach is better than that with the relay-aided scheme and greedy video transmission strategy. 

The rest of this paper is organized as follows. Section \uppercase\expandafter{\romannumeral2} presents the system model. Section \uppercase\expandafter{\romannumeral3} describes the BC video transmission strategy. Section \uppercase\expandafter{\romannumeral4} analyzes the throughput of one-hop V2V cooperation and target-cluster-based cooperation respectively. Section \uppercase\expandafter{\romannumeral5} presents the simulation results and conclusions are drawn in section \uppercase\expandafter{\romannumeral6}.
%
%
%
%

\section{System Model}

\subsection{Network Model}

We focus on the highway scenario with bidirectional vehicle flows, where the roadside infrastructure (e.g., RSUs) are deployed along the road with equal inter-RSU distance $d$, as shown in Fig. \ref{system-model}. It is assumed that the RSUs are all connected to the central server and backbone networks via wireless or wired links. As a result, the vehicles can obtain information services from RSUs while moving within their coverage area.

\begin{figure}[!t]%
	\centering%
	\includegraphics[width=0.45\textwidth]{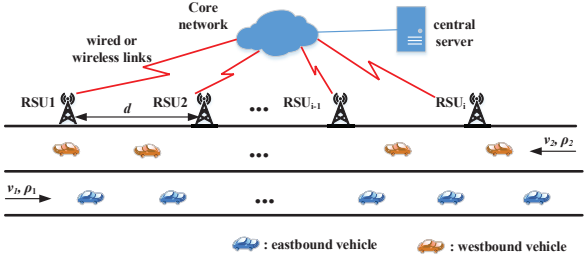}%
	\DeclareGraphicsExtensions. \caption{Network model.}
	\label{system-model}
\end{figure}%
We adopt an extensively used traffic model in a highway scenario \cite{wisitpongphan2007routing}, which in each direction (eastbound and westbound), vehicles arrive following a Poisson process with mean density $\rho_1$ and $\rho_2$, respectively. Therefore, the inter-vehicle distance in each direction follows an exponential distribution. We also assume that the vehicles in each direction move with a constant speed of $v_1$ and $v_2$, respectively \cite{wisitpongphan2007routing}.
\subsection{Wireless Communication Model}
There are two kinds of wireless communication technologies considered in this work: V2V and V2I communications. It is assumed that each vehicle has two antennas: one is for transmitting data and the other is for receiving data. We adopt a simplified but widely used unit disk communication model \cite{wisitpongphan2007routing}. Each vehicle has the same communication range $R_v$ and each RSU has the same communication range $R_u$, with $R_u>R_v$. Because RSUs have a higher capacity than vehicles. Vehicle and RSU (vehicle) can establish the communication link if their distance is less than the communication range $R_u$ ($R_v$).

It is assumed that the data rates of V2I and V2V communication are constant, which are denoted by $r_u$ and $r_v$ respectively. V2I communications can reach a higher data rate compared with V2V communication, i.e., $r_u>r_v$.


In addition, it is assumed that each target occupies one channel (a frequency band), which is known by the RSU and carriers. All the carriers perform transmission according to the RSU's centralized coordination, i.e., they transmit one by one without collisions. Furthermore, we assume V2I communications are given to a higher priority compared with V2V communications, which means that a vehicle cannot obtain data from other vehicles while receiving data from RSU. It is because V2I communications have a higher data rate than V2V communications.

From the system's point of view, there may be multiple channels, and different targets occupy different channels without contentions. That is to say, one transmitter (RSU or carrier) may transmit to multiple targets at the same time, but on different channels. We assume that the wireless resource is sufficient, so the system's operations, including transmitting and receiving, are independent for each target. Therefore, in this paper, we investigate the procedure of one target. Our analyses can be extended to the multi-target scenario by duplicating the single-target procedure.

\subsection{Cooperative Communication Model}

We assume that each vehicle is equipped with a GPS device, which can obtain its own location information. The vehicles can transmit their basic status information, such as location, speed and direction, to the core network when they enter the coverage of RSUs.

We utilize a cooperative communication scheme to facilitate data transmission while a video is played back. A target can be either a stand-alone vehicle or inside a cluster, which are referred as one-hop and cluster-based scenarios in this paper. In the one-hop scenario, we have the cooperation among RSUs and that between RSU and vehicles. An additional cooperation among vehicles is introduced in the cluster-based scenario. These three cooperation schemes are explained as follows.

\subsubsection{Cooperation among RSUs}
The large video file requested by the target can be divided into multiple segments and pushed into different RSUs. Therefore, each RSU has different segments of video data, which are arranged according to the video playback sequence.

\subsubsection{Cooperation between RSU and vehicles (transmission)}

The RSU may not connect to the target all the time, so it will offload some data to the carriers. The RSU determines the data pieces carried by each carrier according to the predicted topology, mobility and the video playback quality. Carriers shall transmit certain data pieces in a fixed time duration according to the RSU's control. By doing so, the RSU and carriers perform the video transmission cooperatively.


\subsubsection{Cooperation among vehicles (receiving)}
The kind of cooperation only exists in the cluster-based scenario. All the vehicles in the cluster shall help the target in the same cluster to receive the data. 
We assume perfect transmissions inside a cluster, so the data received by any vehicle in the cluster can be transmitted to the corresponding target at data rate $r_v$. In this way, all the vehicles in a cluster receive the video cooperatively.
The formation of a cluster is introduced in Sec.~\ref{cluster_size}.

\subsection{Video Model}
\begin{figure}[!t]%
	\centering%
	\includegraphics[width=0.4\textwidth]{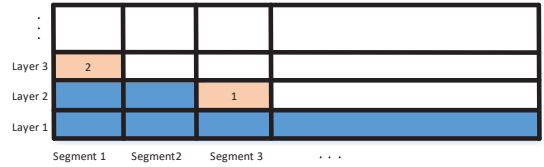}%
	\DeclareGraphicsExtensions. \caption{An example about filling in blocks.}
	\label{example}
\end{figure}%
The whole video can be divided into lots of segments and each segment can be encoded into $L$ video layers based on the SVC technology. Each segment has a base layer and $(L-1)$ enhancement layers. A video segment can be decoded only if the base layer has been received. For video services, the metrics measuring quality of experience (QoE) are interruption ratio, video quality and quality variation \cite{xing2012adaptive}. Interruption ratio is the most important factor for users, which has the highest priority to consider generally. With the guarantee of playback, users wish a better video playback quality. The more layers the target receives, the better video quality it has. In addition, quality variation also has an influence on user experience. For example, as shown in Fig. \ref{example}, the data should be filled in block 1 rather than block 2, although both have the same average video quality. Hence, the blocks in a low layer should be filled in as many as possible before filling in a higher layer. It is assumed that the buffer of each vehicle is large enough to store the video data.

\section{video streaming}

\subsection{One-hop V2V Cooperation}

\begin{figure}[!t]%
	\centering%
	\includegraphics[width=0.45\textwidth]{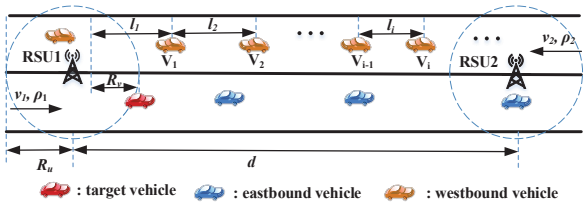}%
	\DeclareGraphicsExtensions. \caption{One-hop V2V cooperation.}
	\label{scenario1}
\end{figure}%

In this section, we introduce our proposed Back Compensation (BC) video transmission strategy. We focus on two adjacent RSUs along the target's moving direction particularly, as shown in Fig. \ref{scenario1}. Denote the first RSU encountered by the target as RSU1 and the second RSU as RSU2. When the target enters the coverage of RSU1, it can receive the video data directly from RSU1. When it moves out of the coverage of RSU1, it can keep receiving data from carriers through V2V communications since each carrier has received different segments of video data from RSU2. In this scenario, we first consider one-hop V2V communications between the target and carriers. Similar procedures repeat as the target travels along the road.

Through the cooperative communication scheme, the core network have the knowledge of carriers that the target will meet. Therefore, we can calculate the amount of data received by the target from RSU1 and carriers.

As for V2I communications, the amount of data the target receives from RSU1 can be expressed as $\frac{{2{R_u}{r_u}}}{{{v_1}}}$. As for V2V communications, a carrier must receive video data from RSU2 before forwarding the data to the target. Therefore, there may be a case that a carrier does not carry sufficient data, which means the carrier has transmitted all of carried data to the target before leaving the coverage of the target. We can prove that such a case will not happen in this scenario, which is shown in the proposition below.

\begin{Pro}
	In the duration of V2V communications, all carriers the target will meet have sufficient amount of data to transmit, with the assumption of $R_u>R_v$ and $r_u>r_v$.
\end{Pro}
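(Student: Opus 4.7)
The plan is to bound the maximum amount of data a carrier can ever deliver during one V2V contact with the target and compare it against a conservative lower bound on the data the carrier has received from RSU2. First, since the carrier and the target move toward each other with relative speed $v_1+v_2$ and V2V contact opens and closes at relative separation $R_v$, the contact lasts $\frac{2R_v}{v_1+v_2}$ seconds, so the total data transmitted during one contact is at most $\frac{2R_v r_v}{v_1+v_2}$.

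Next, I would argue geometrically that V2V contact cannot begin until after the carrier has already entered RSU2's coverage. V2V only occurs while the target is in the inter-RSU gap, so at V2V onset the target position satisfies $x_T<-R_u$ (with RSU2 at the origin) and the carrier sits at $x_C=x_T+R_v<-R_u+R_v<R_u$, using $R_u>R_v$. Thus the carrier has already crossed the eastern boundary $x=R_u$ of RSU2's disk. Moreover, the carrier's total dwell time inside RSU2 is $\frac{2R_u}{v_2}$, so regardless of whether the carrier happens to exit RSU2 before or during the V2V window, by the moment V2V ends the carrier has received at least $\frac{2R_u r_u}{v_2}$ bits from RSU2.

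Then, I would track the carrier's ``undelivered'' buffer throughout the V2V window. Whenever the carrier is simultaneously inside RSU2 and in V2V range of the target (which the two-antenna model permits), the buffer grows at rate $r_u-r_v>0$; once outside RSU2, it drains at rate $r_v$. Consequently the buffer attains its infimum at the end of the V2V window, where its value equals
\begin{equation*}
\frac{2R_u r_u}{v_2}-\frac{2R_v r_v}{v_1+v_2}.
\end{equation*}
The proposition reduces to the nonnegativity of this quantity, equivalently $R_u r_u (v_1+v_2)\ge R_v r_v v_2$, which is immediate because each paired factor on the left dominates its partner on the right: $R_u>R_v$, $r_u>r_v$, and $v_1+v_2>v_2$.

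I expect the main obstacle to be the geometric case analysis: pinning down whether the carrier is still within RSU2's disk at the start and at the end of the V2V window, and showing rigorously that V2V cannot actually begin before the carrier has entered that disk. Once these cases are unified by the buffer-monotonicity observation, only the elementary inequality above remains, and the strict-inequality hypotheses on $R$'s and $r$'s close the proof with margin to spare.
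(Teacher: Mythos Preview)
Your argument has a genuine gap: you assume each carrier downloads from RSU2 for its entire geometric dwell time $\tfrac{2R_u}{v_2}$, yielding the lower bound $\tfrac{2R_u r_u}{v_2}$ on the data it carries. But in the paper's model the RSU serves carriers one at a time (the ``one by one without collisions'' / FIFO assumption), so carrier $V_i$ is served only for $\min\{l_i/v_2,\,2R_u/v_2\}$, where $l_i$ is the inter-vehicle gap to $V_{i-1}$. When $l_i<2R_u$ the carrier's stock is only $\tfrac{l_i r_u}{v_2}$, and there is no reason this should dominate your V2V upper bound $\tfrac{2R_v r_v}{v_1+v_2}$ for small $l_i$. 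So the comparison you set up can fail exactly for closely spaced carriers.

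The paper's proof avoids this by using the analogous FIFO constraint on the V2V side as well: the target also services carriers one by one, so the data delivered by $V_i$ is at most $\min\{l_i/(v_1+v_2),\,2R_v/(v_1+v_2)\}\cdot r_v$. The crucial point you are missing is that the \emph{same} gap $l_i$ caps both the download at RSU2 and the delivery to the target. Comparing term by term then gives
\[
\frac{l_i}{v_2}r_u>\frac{l_i}{v_1+v_2}r_v,\qquad \frac{2R_u}{v_2}r_u>\frac{2R_v}{v_1+v_2}r_v,
\]
and hence $D_i^u>D_i^v$ regardless of which branch of the $\min$ is active. Your geometric/buffer-tracking machinery is not needed once this pairing is observed.

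A minor additional slip: the buffer need not attain its infimum at the \emph{end} of the V2V window. If the carrier is still inside RSU2 when V2V starts, the buffer first increases (rate $r_u-r_v$) and then decreases, so the infimum could equally well be at the start. This is harmless because the start value is trivially nonnegative, but the statement as written is inaccurate.
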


\begin{proof}
	The amount of data each carrier $V_i$ receives from RSU2 is expressed as 
	\begin{equation}\label{}
	D_i^u = \min \{ \frac{{{l_i}}}{{{v_2}}}{r_u},\frac{{2{R_u}}}{{{v_2}}}{r_u}\} ,i = 1,2,...,
	\end{equation}
	where $l_i$ is the inter-vehicle distance between $V_{i-1}$ and $V_i$. The amount of data the target receives from each carrier $V_i$ without considering the limitation of the amount of data they receive from RSU2 can be expressed as
	\begin{equation}\label{}
	D_i^v = \left\{ \begin{array}{l}
	\min \{ \frac{{{l_{1'}}}}{{{v_1} + {v_2}}}{r_v},\frac{{2{R_v}}}{{{v_1} + {v_2}}}{r_v}\} ,i = 1,\\
	\min \{ \frac{{{l_i}}}{{{v_1} + {v_2}}}{r_v},\frac{{2{R_v}}}{{{v_1} + {v_2}}}{r_v}\} ,i = 2,3,...,
	\end{array} \right.
	\end{equation}
	where ${l_1} \ge {l_{1'}}$. Thus we have $D_i^v<D_i^u, i=1,2,...$, considering $R_u>R_v$, $r_u>r_v$. Therefore, we can conclude that all carriers have sufficient amount of data to transmit to the target.

\end{proof}

Therefore, we can obtain the amount of data the target receives from all carriers with V2V communications as
\begin{equation}\label{}
{D_i} = \min \{ \frac{{{l_i}}}{{{v_1} + {v_2}}}{r_v},\frac{{2{R_v}}}{{{v_1} + {v_2}}}{r_v}\} ,i = 1,2,...,
\end{equation}
and $l_1$ here denotes the distance as shown in Fig. \ref{scenario1} at the moment that the target moves out of the coverage of RSU1. $l_i$ follows exponential distribution according to the memoryless property.

We can also calculate the time instant when the target meets carriers according to their locations and speeds. Having known the amount of data the target receives at any moment, now we determine which data should be transmitted. We define the bit rate of each video layer as $r_q^b$, $q=1,2,...,L$.
Denote the carriers the target will meet in sequence as $V_i$, $i=1,2,...$. Denote the amount of data received by the target as $D_i, i=0,1,2...$, where $D_0$ and $D_i$ ($i=1,2,...$) means the amount of data from RSU1 and carriers $V_i$ respectively. Denote the remaining amount of data after data filling in the $q$th layer as $D_i^R$. Denote the time interval between the time instant when the target begins to receive $D_i$ and the time instant when the target begins to receive $D_{i+1}$ as $t_i, i=0,1,2,...$. Denote the maximum video playback duration of the $i$th time interval in the $q$th layer as $t_{q,i}$ and denote the actual playback duration of $t_i$ in the $q$th layer as $\Delta t_{q,i}'$.

\begin{figure}[!t]%
	\centering%
	\includegraphics[width=0.48\textwidth]{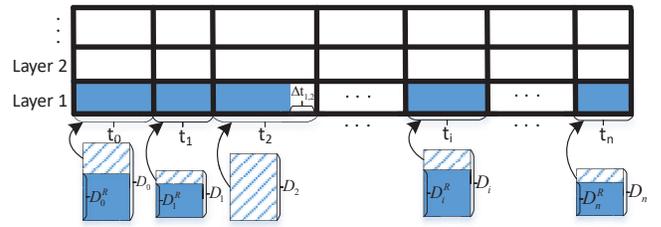}%
	\DeclareGraphicsExtensions. \caption{Video transmission design.}
	\label{video_design}
\end{figure}%

As we mentioned above, we should put data into the buffer of the base layer first. As shown in Fig. \ref{video_design}, there are totally ($n+1$) buffer blocks in the base layer, where $n$ means the number of carriers the target will meet and the buffer blocks correspond to the time interval $t_i$s. For the base layer, $D_i^R$ is equal to $D_i$. Put data $D_i^R$ into buffer block $t_{1,i}$ respectively. As a result, some buffers like $t_{1,0}$ and $t_{1,1}$ are full, but some are not full like $t_{1,2}$. Hence, there may be some interruption time intervals, like $\Delta t_{1,2}$ in $t_{1,2}$. Considering the property of video playback, the arrival data in the enhancement layer is invalid if its segment has been played. Therefore, we fill in the interruption segments from back to front (from $t_n$ to $t_0$). In addition, for each interruption segment $t_i$, the remaining data we choose to fill in are also in the sequence from back to front starting from $D_{i-1}^R$. Applying this strategy from the low layer to the higher layers until all $D_i^R$s are equal to 0 or the buffer block $t_{L,n}$ is full. At last, we adopt the \textit{dress right} operation for each layer to decrease the layer variation, which means to push the data filled in the buffer block backward. As a result, the largest quality variation is equal to $L$. 

Algorithm 1 shows the details of the video transmission design. Line 2 to 8 describe the process of filling in the $t_{q,i}$ buffer block with remaining data $D_i^R$ respectively. Line 9 to 20 describe the process of filling in the interruption block from back to front. The maximum video playback duration of the $i$th time interval in the ($q+1$)th layer can be obtained from line 21 to 23. Line 24 to 27 describe the conditions to break the loop. Finally, execute the \textit{dress right} operation for each layer.

After the process of video transmission design, we allocate the data to RSU1 and carriers in turn, as shown in Fig. \ref{video_allocation}.

\begin{algorithm}[!t]
	\caption{Video Transmission Design}
	\begin{algorithmic}[1]
		\FOR {$q=1$ to $L$}
		\FOR {$i=0$ to $n$}
		\IF {$D_i^R\geq r_q^b\cdot t_{q,i}$}
		\STATE $t_{q,i}' \leftarrow t_{q,i}$, $D_i^R \leftarrow D_i^R-r_q^b \cdot t_{q,i}'$.			 
		\ELSE
		\STATE $t_{q,i}' \leftarrow \frac{D_i^R}{r_q^b}$, $D_i^R \leftarrow 0$.				 
		\ENDIF		
		\ENDFOR
		\FOR{$i=n$ to $0$}
		\IF{$t_{q,i}'<t_{q,i}$}
		\FOR{$j=i-1$ to $0$}
		
		\IF{$D_j^R\geq (t_{q,i}-t_{q,i}')\cdot r_q^b$}
		\STATE $D_j^R\leftarrow D_j^R-(t_{q,i}-t_{q,i}')\cdot r_q^b$, $t_{q,i}' \leftarrow t_{q,i}$,
		\STATE Break the loop.
		\ELSE 
		\STATE $t_{q,i}' \leftarrow t_{q,i}'+ \frac{D_j^R}{r_q^b}$, $D_j^R\leftarrow 0$.
		
		\ENDIF
		
		\ENDFOR
		\ENDIF
		\ENDFOR
		\FOR{$i=0$ to $n$}
		\STATE $t_{q+1,i} \leftarrow t_{q,i}'$.
		\ENDFOR
		\IF{All $D_i^R$ are equal to 0 or the buffer block $t_{L,n}$ is full}
		\STATE Break the loop.
		\ENDIF
		\ENDFOR
		\STATE Dress right for each layer. 
	\end{algorithmic}
\end{algorithm}
\begin{figure}[!t]%
	\centering%
	\includegraphics[width=0.4\textwidth]{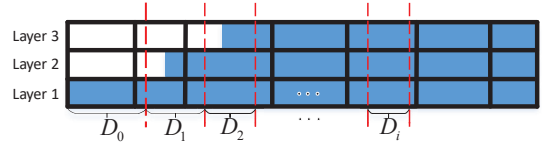}%
	\DeclareGraphicsExtensions. \caption{Video data allocation.}
	\label{video_allocation}
\end{figure}%
\subsection{Target Cluster-based Cooperation}
\label{cluster_size}

In the previous subsection, the amount of data received by the target from carriers are limited by V2V communications. However, the target can receive more video data  by increasing the hops of V2V communications. In vehicular networks, vehicles moving in the same direction can form clusters, in which the distances between two neighboring vehicles are no more than V2V communication range $R_v$. Video data can be transmitted to other vehicles through multi-hop links in a cluster. Therefore, we also consider target-cluster-based cooperation as shown in Fig. \ref{scenario2}.

Denote the cluster size as $C_s$, which means the distance between the first and last vehicles in the cluster. So the amount of data the target receives from RSU1 directly and indirectly (with the help of cluster relays) can be expressed as $\frac{{{C_s} + 2{R_u}}}{{{v_1}}}$. When the target has received all data from RSU1, it relies on carriers in the opposite direction, which carry and forward data from RSU2. We also consider the simple First-In-First-Out (FIFO) scheduling scheme for V2I and V2V communication mentioned above. This scheme is not the optimal solution, because the optimal schedule should consider both the correlation between the amount of data received by neighboring carriers from RSU2 and between the amount of data received by the target from neighboring carriers, which is much complicated and left for future work. Therefore, the amount of data the target can receive from each carrier is expressed as

\begin{small}
	\begin{equation}\label{}
	{D_i^c} = \left\{ \begin{array}{l}
	\min \{ \frac{{{l_1}}}{{{v_2}}}{r_u},\frac{{2{R_u}}}{{{v_2}}}{r_u},\frac{{{l_{1'}}}}{{{v_1} + {v_2}}}{r_v},\frac{{{C_s} + 2{R_v}}}{{{v_1} + {v_2}}}{r_v}\} ,i = 1,\\
	\min \{ \frac{{{l_i}}}{{{v_2}}}{r_u},\frac{{2{R_u}}}{{{v_2}}}{r_u},\frac{{{l_i}}}{{{v_1} + {v_2}}}{r_v},\frac{{{C_s} + 2{R_v}}}{{{v_1} + {v_2}}}{r_v}\} ,i = 2,3,...,
	\end{array} \right.
	\end{equation}
\end{small}
where ${l_1} \ge {l_{1'}}$. Therefore, according to $R_u>R_v$, $r_u>r_v$ and the memoryless property of exponential distribution, (4) can be simplified to
\begin{equation}\label{}
{D_i^c} = \min \{ \frac{{2{R_u}}}{{{v_2}}}{r_u},\frac{{{l_i}}}{{{v_1} + {v_2}}}{r_v},\frac{{{C_s} + 2{R_v}}}{{{v_1} + {v_2}}}{r_v}\} ,i = 1,2,3,...,
\end{equation}
and $l_1$ here denote the distance as shown in Fig. \ref{scenario2} at the time instant when the target has received all data from RSU1 with the help of relays in cluster. Given a cluster size $C_s$, we can use the same BC strategy mentioned in one-hop V2V cooperation.

\begin{figure}[!t]%
	\centering%
	\includegraphics[width=0.45\textwidth]{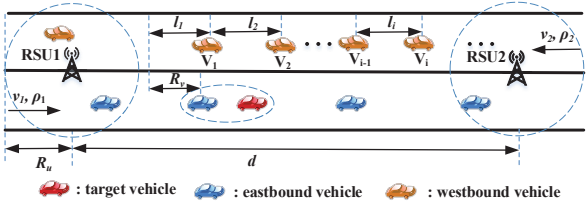}%
	\DeclareGraphicsExtensions. \caption{Target-cluster-based cooperation.}
	\label{scenario2}
\end{figure}%

\section{Throughput Analysis}

In this section, we make the analysis of throughput, which is defined as the amount of data received by the target per unit time. If the bit rate of base layer is higher than the throughput, it is likely to cause a long interruption time. If the total bit rate of $L$ layer is lower than the throughput, it may cause the waste of wireless resource. Therefore, the value of throughput can give the central server advice for video encoding design.
\subsection{One-hop V2V Cooperation}

We denote one period as the interval between the time instant when the target enters in the RSU1's coverage and the time instant when the target enters in the RSU2's coverage, which is $T=d/v_1$. Hence, the throughput in one period can be given by:
\begin{equation}\label{}
\eta {\rm{ = }}\frac{{E[{D_{u}}] + E[{D_{v}}]}}{T},
\end{equation}
where $E[D_{u}]$ and $E[D_{v}]$ represent the expected amount of data the target receives from RSU1 and carriers in one period respectively.

When the target is within the coverage of RSU1, it can only receive data through V2I communications. Therefore, the expected amount of data the target receives from RSU1 is
\begin{equation}\label{}
E[{D_{u}}] = \frac{{2{R_u}{r_u}}}{{{v_1}}}.
\end{equation}

Denote the number of carriers the target will meet in one period as $N$, which is a random non-negative integer. As mentioned above, the amount of data the target can receive from a carrier $V_i$ is expressed as
\begin{equation}\label{}
{D_{i}} = \frac{{\min \{ {l_i},2{R_v}\} }}{{{v_1} + {v_2}}} \cdot {r_v},i = 1,2,...,{N}.
\end{equation}
Hence, the total amount of data the target receives from all carriers in one period, denoted by $D_{v}$, can be expressed as the summation of the amount of data received by the target from $N$ carriers:
\begin{equation}\label{}
{D_{v}} = \sum\limits_{i = 1}^{{N}} {{D_{i}}}.
\end{equation}

Note that both the number of carriers $N$ and the inter-vehicle distance $l_i$ are random non-independent variables. Like \cite{chen2017throughput}, we approximately assume that these two variables are independent. Therefore, the expected amount of data the target receives from carriers in one period, denoted by $E[D_{v}]$ can be expressed as:
\begin{equation}\label{}
{E[D_{v}]} \approx E[{N}] \cdot E[{D_{i}}],
\end{equation}
where $E[N]$ is the expected number of carriers the target can meet in one period, and $E[D_i]$ is the expected amount of data the target received from a carrier. According to \cite{gallager2013stochastic}, when $(d-2R_u)/v_1\gg E[t_{inter}]$, $E[N]$ can be approximated by:
\begin{equation}\label{}
E[{N}] \approx \frac{{d - 2{R_u}}}{{{v_1} \cdot E[{t_{inter}}]}},
\end{equation}
where $E[t_{inter}]$ is the expected arrival time interval of carriers, which can be calculated as
\begin{equation}\label{}
E[{t_{inter}}] = \frac{1}{{{\rho _2}({v_1} + {v_2})}},
\end{equation}
where $(v_1+v_2)$ means the relative velocity between the target and carriers.
The expected amount of data between the target and a carrier $V_i$, denoted by $E[D_i]$, can be calculated as

\begin{equation}\label{}
\begin{split}
E[{D_{i}}] &= (\Pr \{ {l_i} < 2{R_v}\}  \cdot E[{l_i}|{l_i} < 2{R_v}]\\
& \quad {\rm{ + }}\Pr \{ {l_i} \ge 2{R_v}\}  \cdot 2{R_v}) \cdot \frac{{{r_v}}}{{{v_1} + {v_2}}} \\
&= ((1 - e{}^{ - 2{\rho _2}{R_v}}) \cdot (\frac{1}{{{\rho _2}}} - \frac{{2{R_v}{e^{ - 2{\rho _2}{R_v}}}}}{{1 - {e^{ - 2{\rho _2}{R_v}}}}}) \\
& \quad + e{}^{ - 2{\rho _2}{R_v}} \cdot 2R_v) \cdot \frac{{{r_v}}}{{{v_1} + {v_2}}}.
\end{split}
\end{equation}

Hence, we can obtain $\eta$ combining (6), (7), (10-13).

\subsection{Target-cluster-based Cooperation}

According to \cite{he2017delay}, the probability distribution function of cluster size can be given by:
\begin{equation}\label{}
{f_c}(x) = \frac{{{x^{k - 1}}{e^{ - \frac{x}{\theta }}}}}{{{\theta ^k}\Gamma (k)}},x > 0,
\end{equation}
where $k = \frac{{E{{\{ {C_s}\} }^2}}}{{E\{ {C_s}^2\}  - E{{\{ {C_s}\} }^2}}}$ and $\theta  = \frac{{E\{ {C_s}^2\}  - E{{\{ {C_s}\} }^2}}}{{E\{ {C_s}\} }}$, $\Gamma ( \cdot )$ is the Gamma function, and $E\{C_s\}$ and $E\{ {C_s}^2\}$ can be obtained from \cite{he2017delay}.

\begin{figure}[!t]%
	\centering%
	\includegraphics[width=0.45\textwidth]{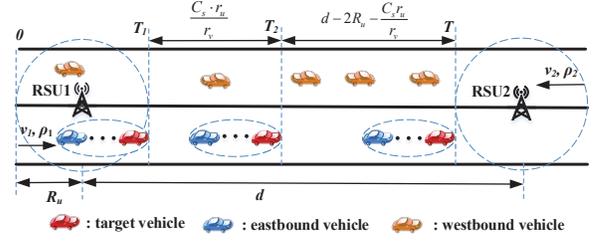}%
	\DeclareGraphicsExtensions. \caption{Definition of $T_1$ and $T_2$.}
	\label{define_T}
\end{figure}%

The amount of data the target received from each carrier $V_i$ can be expressed as (5). 
If it satisfies $\frac{{2{R_u}}}{{{v_2}}}{r_u} > \frac{{2{R_v} + {C_s}}}{{{v_1} + {v_2}}}{r_v}$, i.e., ${C_s} < \frac{{2{R_u}{r_u}}}{{{v_2}{r_v}}}({v_1} + {v_2}) - 2{R_v}$, ${D_{i}^c} = \frac{{\min \{ {l_i},{C_s} + 2{R_v}\} }}{{{v_1} + {v_2}}}{r_v}$, otherwise, ${D_{i}^c} = \min \{ \frac{{2{R_u}}}{{{v_2}}}{r_u},\frac{{{l_i}}}{{{v_1} + {v_2}}}{r_v}\}$. However, if the length of cluster $C_s$ is large enough, the target will not need the help of V2V communications. As shown in Fig. \ref{define_T}, denote the time instant when the target moves out of the coverage of RSU1 as $T_1$, and the time instant when the amount of data the target cluster receives from RSU1 all forwarded to the target as $T_2$. The distance that the target moves from $T_1$ to $T_2$ can be expressed as $(\frac{{{C_s}}}{{{v_1}}}{r_u})/{r_v} \cdot {v_1} = \frac{{{C_s}{r_u}}}{{{r_v}}}$. Therefore, if the target needs the help of carriers through V2V communications, it should satisfy the condition $d - 2{R_u} - \frac{{{C_s}{r_u}}}{{{r_v}}} > 0$, i.e., ${C_s} < \frac{{(d - 2{R_u}){r_v}}}{{{r_u}}}$. Hence, we denote two threshold values of $C_s$ as follows:
\begin{equation}\label{}
{C_{s1}} = \frac{{2{R_u}{r_u}}}{{{v_2}{r_v}}}({v_1} + {v_2}) - 2{R_v},
\end{equation}
\begin{equation}\label{}
{C_{s2}} = \frac{{(d - 2{R_u})}}{{{r_u}}}{r_v}.
\end{equation}

If $C_{s1}\leq C_{s2}$, i.e., $d \geq (\frac{{2{R_u}{r_u}}}{{{v_2}{r_v}}}({v_1} + {v_2}) - 2{R_v})\frac{{{r_u}}}{{{r_v}}} + 2{R_u}$, there are three cases below.

\textit{Case 1.1}: $0\leq C_s\leq C_{s1}$. In this case, given a certain $C_s$, the throughput in one period can be given by:
\begin{equation}\label{}
{\eta _{11}^c}{\rm{ = }}\frac{{E[{D_{u11}^c}] + E[{D_{v11}^c}]}}{T},
\end{equation}
where $E[D_{u11}^c]$ represents the expected amount of data the target receives from RSU1 directly and indirectly, and $E[D_{v11}^c]$ represents the expected amount of data the target receives from the carriers. $E[D_{u11}^c]$ can be expressed as:
\begin{equation}\label{}
E[{D_{u11}^c}] = \frac{{2{R_u} + {C_s}}}{{{v_1}}}{r_u}.
\end{equation}
The expected number of carriers the target can meet in one period, denoted by $E[N_{11}^c]$, can be expressed as
\begin{equation}\label{}
E[{N_{11}^c}] \approx \frac{{d - 2{R_u} - \frac{{{C_s}{r_u}}}{{{r_v}}}}}{{{v_1} \cdot E[{t_{inter}}]}}.
\end{equation}
In this case, ${D_{i}^c} = \frac{{\min \{ {l_i},{C_s} + 2{R_v}\} }}{{{v_1} + {v_2}}} \cdot {r_v}$. Hence, the expected amount of data the target receives from a carrier can be expressed as
	\begin{equation}\label{}
	\begin{split}
	E[D_{i11}^c]  &= (\Pr \{ {l_i} < {C_s} + 2{R_v}\}  \cdot E[{l_i}|{l_i} < {C_s} + 2{R_v}] \\
	&\quad + \Pr \{ {l_i} \ge {C_s} + 2{R_v}\}  \cdot ({C_s} + 2{R_v})) \cdot \frac{{{r_v}}}{{{v_1} + {v_2}}},
	\end{split}
	\end{equation}

where
\begin{equation}\label{}
\Pr \{ {l_i} < {C_s} + 2{R_v}\}  = 1 - {e^{ - {\rho _2}({C_s} + 2{R_v})}},
\end{equation}
\begin{equation}\label{}
\Pr \{ {l_i} \ge {C_s} + 2{R_v}\}  = {e^{ - {\rho _2}({C_s} + 2{R_v})}},
\end{equation}
\begin{equation}\label{}
E[{l_i}|{l_i} < {C_s} + 2{R_v}] = \frac{1}{{{\rho _2}}} - \frac{{({C_s} + 2{R_v}){e^{ - {\rho _2}({C_s} + 2{R_v})}}}}{{1 - {e^{ - {\rho _2}({C_s} + 2{R_v})}}}}.
\end{equation}

Therefore, the expected amount of data $E[D_{v11}^c]$ can be expressed as
\begin{equation}\label{}
E[{D_{v11}^c}] \approx E[N_{11}^c] \cdot E[D_{i11}^c].
\end{equation}

\textit{Case 1.2}: $C_{s1}<C_s\leq C_{s2}$.
In this case, similarly to Case 1.1, given a certain $C_s$, the throughput in one period can be given by:
\begin{equation}\label{}
{\eta _{12}^c}{\rm{ = }}\frac{{E[{D_{u12}^c}] + E[{D_{v12}^c}]}}{T}.
\end{equation}
Obviously, $E[D_{u12}^c]$ is equal to $E[D_{u11}^c]$, and $E[N_{12}^c]$ is equal to $E[N_{11}^c]$. In this case, $D_{i}^c = \min \{ \frac{{{l_i}}}{{{v_1} + {v_2}}}{r_v},\frac{{2{R_u}}}{{{v_2}}}{r_u}\}$, and the expected amount of data the target receives from a carrier can be expressed as
\begin{equation}\label{}
\begin{split}
E[{D_{i12}^c}] &= \Pr \{ {l_i} < c_1\}  \cdot E[{l_i}|{l_i} < c_1] \cdot \frac{{{r_v}}}{{{v_1} + {v_2}}} \\
&\quad + \Pr \{ {l_i} \ge c_1\}  \cdot \frac{{2{R_u}}}{{{v_2}}}{r_u},
\end{split}
\end{equation}
where 
\begin{equation}\label{}
c_1 = \frac{{2{R_u}{r_u}({v_1} + {v_2})}}{{{r_v}{v_2}}},
\end{equation}
\begin{equation}\label{}
\Pr \{ {l_i} < c_1\}  = 1 - {e^{ - {\rho _2}c_1}},
\end{equation}
\begin{equation}\label{}
\Pr \{ {l_i} \ge c_1\}  = {e^{ - {\rho _2}c_1}},
\end{equation}
\begin{equation}\label{}
E[{l_i}|{l_i} < c_1] = \frac{1}{{{\rho _2}}} - \frac{{c_1{e^{ - {\rho _2}c_1}}}}{{1 - {e^{ - {\rho _2}c_1}}}}.
\end{equation}
Therefore, the expected amount of data $E[D_{v12}^c]$ can be expressed as
\begin{equation}\label{}
E[{D_{v12}^c}] \approx E[N_{12}^c] \cdot E[D_{i12}^c]. 
\end{equation}

\textit{Case 1.3}: $C_s>C_{s2}$. In this case, given a certain $C_s$, the throughput in one period can be given by:
\begin{equation}\label{}
{\eta _{13}^c} = \frac{{E[{D_{u13}^c}]}}{T},
\end{equation}
where
\begin{equation}\label{}
E[{D_{u13}^c}] = \frac{{2{R_u}}}{{{v_1}}}{r_u} + \frac{{(d - 2{R_u})}}{{{v_1}}}{r_v}.
\end{equation}
Therefore, the throughput considering Case 1.1,  Case 1.2 and Case 1.3  can be calculated by:
\begin{equation}\label{}
\begin{split}
{\eta _1^c} &= \int_0^{{C_{s1}}} {{\eta _{11}^c} \cdot {f_c}({C_s})d{C_s}} + \int_{{C_{s1}}}^{{C_{s2}}} {{\eta _{12}^c} \cdot {f_c}({C_s})d{C_s}}  \\
&\quad + \int_{{C_{s2}}}^{ + \infty } {{\eta _{13}^c}}  \cdot {f_c}({C_s})d{C_s}.
\end{split}
\end{equation}

If $C_{s1}> C_{s2}$, i.e., $d < (\frac{{2{R_u}{r_u}}}{{{v_2}{r_v}}}({v_1} + {v_2}) - 2{R_v})\frac{{{r_u}}}{{{r_v}}} + 2{R_u}$, there are two cases below.

\textit{Case 2.1}: $0\leq C_s\leq C_{s2}$. This case is similar to Case 1.1, so given a certain $C_s$, the throughput in one period $\eta_{21}^c$ is equal to $\eta_{11}^c$. 

\textit{Case 2.2}: $C_s>C_{s2}$. This case is similar to Case 1.3, so given a certain $C_s$, the throughput in one period $\eta_{22}^c$ is equal to $\eta_{13}^c$.

Therefore, the throughput considering Case 2.1 and Case 2.2 can be calculated by:
\begin{equation}\label{}
{\eta _2^c} = \int_0^{{C_{s2}}} {{\eta _{21}^c} \cdot {f_c}({C_s})d{C_s}} + \int_{{C_{s2}}}^{ + \infty } {{\eta _{22}^c}}  \cdot {f_c}({C_s})d{C_s}.
\end{equation}

\section{Performance Evaluation}
\begin{figure*}[!t]
	\centering
	\begin{minipage}[t]{0.25\linewidth}
		\centering
		\includegraphics[width=1\textwidth]{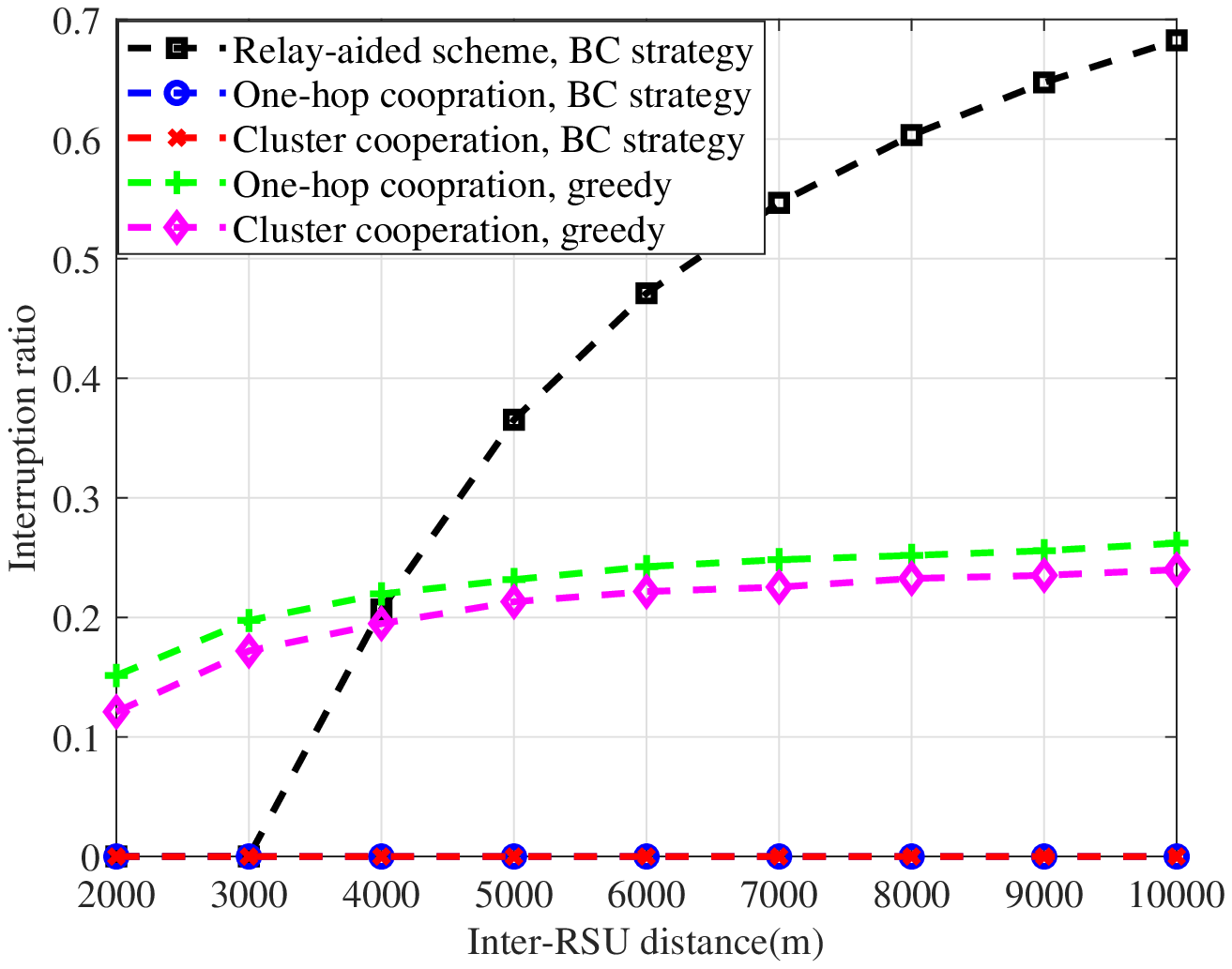}
		\caption{Interruption ratio.}
		\label{IR}
	\end{minipage}%
	\begin{minipage}[t]{0.25\linewidth}
		\centering
		\includegraphics[width=1\textwidth]{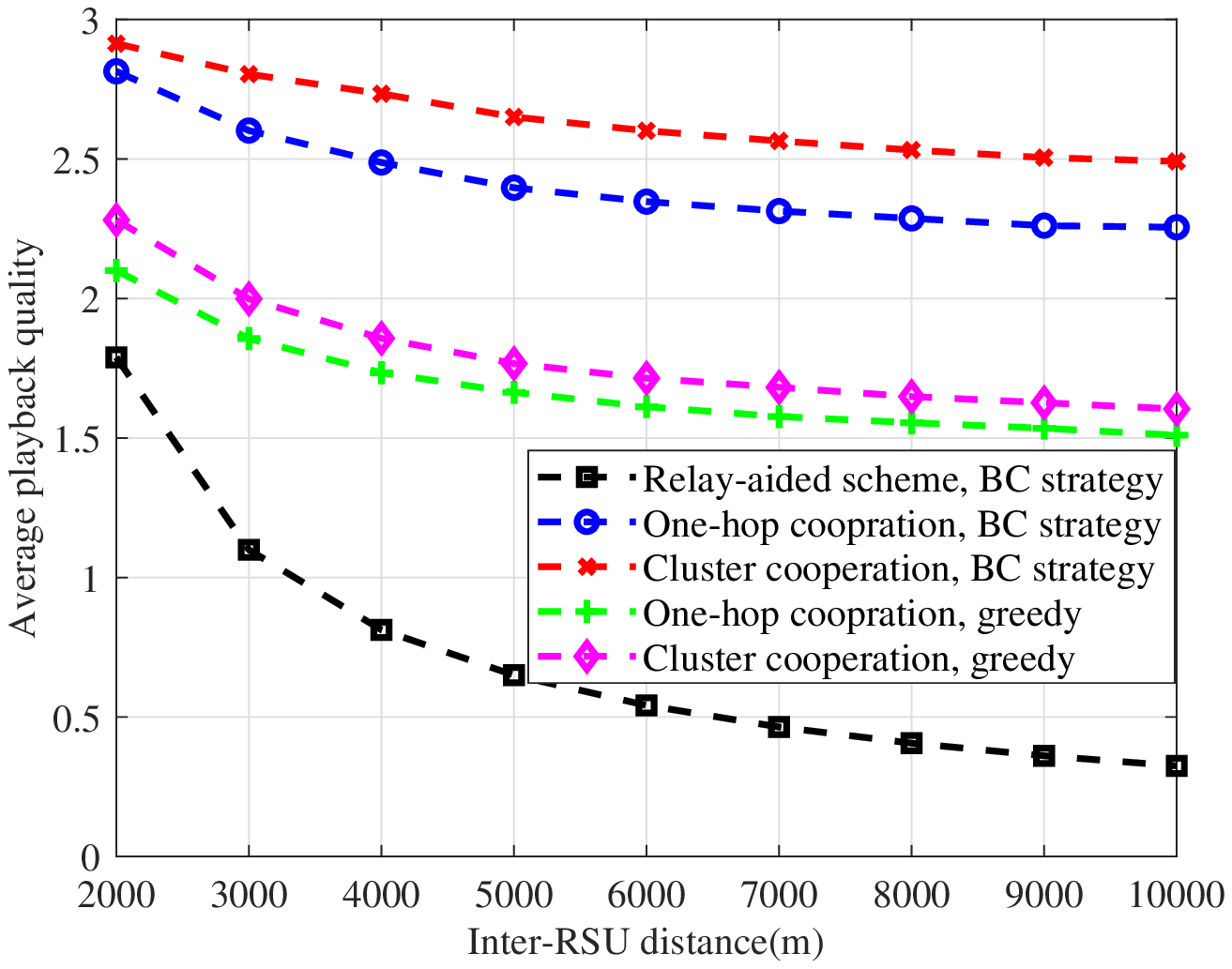}
		\caption{Average playback quality.}
		\label{Ave_qua}
	\end{minipage}%
	\begin{minipage}[t]{0.25\linewidth}
		\centering
		\includegraphics[width=1\textwidth]{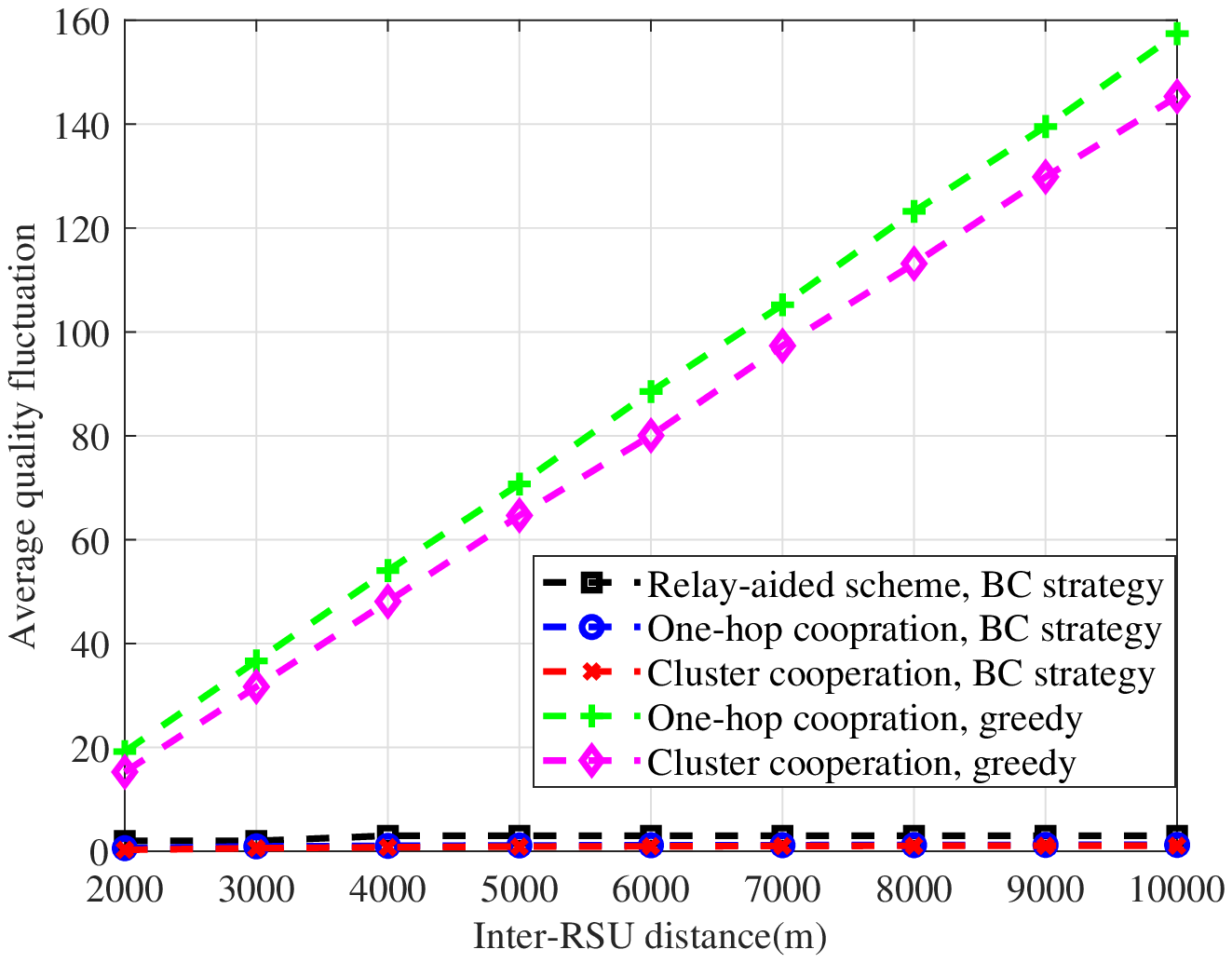}
		\caption{Average quality variation.}
		\label{Ave_flu}
	\end{minipage}%
	\begin{minipage}[t]{0.25\linewidth}
		\centering
		\includegraphics[width=1\textwidth]{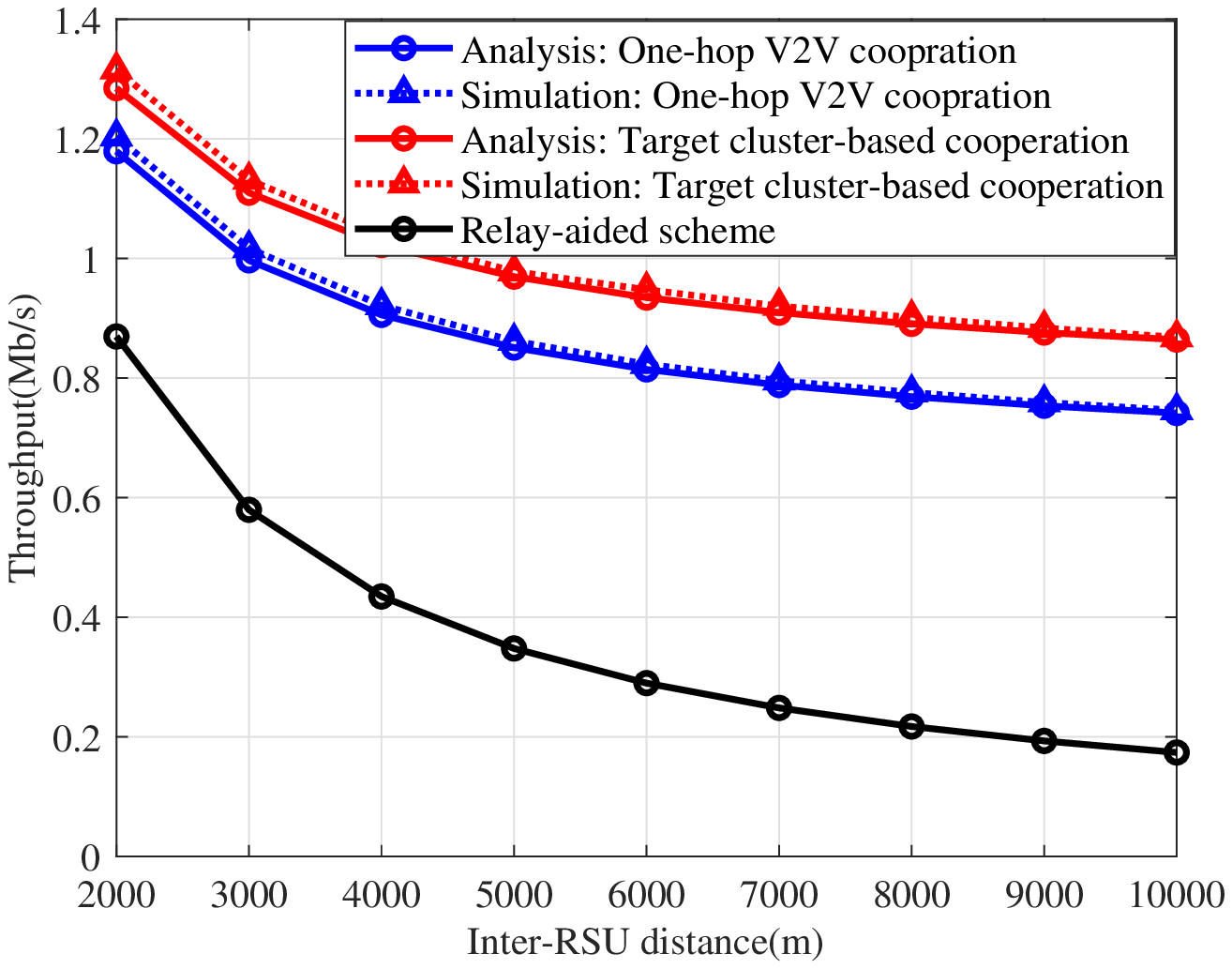}
		\caption{Throughput.}
		\label{throughput}
	\end{minipage}
\end{figure*}

To evaluate the performance of our proposed cooperation scheme and BC video transmission strategy, we implemented simulation with MATLAB. We consider the bidirectional highway with eastbound and westbound lanes, where the vehicle densities are $\rho_1$ = 0.007veh/m and $\rho_2$ = 0.005veh/m and the velocities are $v_1$ = 15m/s and $v_2$ = 20m/s respectively. The RSUs are regularly deployed along the road with inter-RSU distance $d$ varying from 2000m to 10000m. The communication range of RSUs and vehicles are $R_u$ = 400m and $R_v$ = 200m (typical communication range with DSRC \cite{chen2016achievable}), respectively. The bit rate of V2V and V2I communications are $r_v$ = 1Mb/s and $r_u$ = 2Mb/s. For each simulation result, it is an average of 2000 trials with random seeds. We use a real video trace of $Football$ \cite{xing2016maximum} in our simulation, which is encoded into a base layer and two enhancement layers. The average bit rate of the three video layers are 537.9Kbps, 420.3Kbps and 450.5Kbps, respectively.

We use three metrics to quantify the performance. \textbf{Interruption ratio}: Interruption ratio (IR) is defined as the ratio of video interruption time to the target running time at this road segment, which can denoted by $IR=1-\sum\nolimits_{i = 0}^n {{t_{1,i}'}} /T $. \textbf{Average playback quality}: As mentioned in \cite{xing2012adaptive}, the video with lower quality may have a higher PSNR value for SVC technology. Hence, we define the average playback quality (APQ) as the average layer number to all playback segments, which is denoted by $APQ=\sum\nolimits_{q = 1}^L {\sum\nolimits_{i = 0}^n {{t_{q,i}'}} } /T$. \textbf{Average quality variation}: Average quality variation (AQV) is defined as the number of video layer changing, from low layer to high layer or from high layer to low layer.

The performance results are shown in Figs. \ref{IR}, \ref{Ave_qua} and \ref{Ave_flu}. We compare the results with relay-aided scheme and greedy video transmission strategy. \textit{Gr} means using the amount of data $D_i$ to service time interval $t_i$. From Fig. \ref{IR}, we can see that the interruption ratio with cooperation schemes and BC strategy can be always maintained an extremely small value compared with others, because the target can obtain the help of carriers continuously along the road and the data received by the target earlier can be used to fill in subsequent blocks. From Fig. \ref{Ave_qua}, we can see that the cooperation schemes can achieve a higher average playback quality than the relay-aided scheme. In addition, the performance of our BC video transmission strategy is better than that of greedy strategy. The average playback quality with cluster-based cooperation is better than that with one-hop V2V cooperation, because the target and its neighboring vehicles can form a cluster to receive data for taking full advantage of the wireless resource. What's more, the average playback quality will decrease with the increase of inter-RSU distance, as the amount of data forwarded by carriers is smaller than that by RSU. Fig. \ref{Ave_qua} shows the performance of average quality variation. As we mentioned above, our BC video transmission strategy can guarantee AQV is equal to $L$ at most, yet the AQV with greedy strategy will increase with the increase of inter-RSU distance. 

Fig. \ref{throughput} shows the impact of inter-RSU distance on throughput. The throughput decreases with the increase of inter-RSU distance and the throughput with cooperation scheme is larger than that with relay-aided scheme, which have the same reason as the curves in Fig. \ref{Ave_qua}. In addition, the simulation results are very close to analytical results, showing that our analytical method is effective and the analytical results are accurate.
\section{Conclusions}

In this paper, We proposed a cooperative communication scheme to facilitate video data transmission. With the knowledge of vehicle status information, we designed a BC video transmission strategy to decrease the interruption ratio, video quality variation and improve the playback quality. Simulation results demonstrated that our approach can achieve better performance compared with relay-aided scheme and greedy transmission strategy. In addition, by extending the V2V communication time, the performance of target-cluster-based cooperation is better than that of one-hop V2V cooperation. We also analyze the throughput with one-hop and target-cluster-based cooperation schemes respectively and obtain their closed-form expressions, which is verified through simulations. The value of throughput is useful to the central server for video encoding design. In the future, we will design a video transmission strategy considering packet loss.

\bibliographystyle{IEEEtran}
\bibliography{IEEEabrv,reference}

\end{document}